\newcommand*{\R}{\mathbbm{R}}
\newcommand*{\grad}{\nabla}
\newcommand*{\dims}[2]{\R^{#1 \times #2}}
\newcommand*{\diag}{\mathrm{diag}}
\newcommand*{\range}{\mathcal{R}}
\newcommand{\maximize}{\mathop{\operator@font{maximize}}}
\newcommand{\minimize}{\mathop{\operator@font{minimize}}}
\newcommand*{\st}{\text{subject to }}
\renewcommand{\l}{\ell}
\newtheorem{lemma}{Lemma}
\newtheorem{proposition}{Proposition}
\newtheorem{theorem}{Theorem}
\newcounter{definition}
\newenvironment{definition}[1][]
 {\refstepcounter{definition}{\bf Definition~\arabic{definition}.~#1}}{}
\newcommand{\ecoli}{\emph{E. coli}}
\begin{document}
 
\title{A variational principle for computing nonequilibrium fluxes and
  potentials in genome-scale biochemical networks}


\author{R. M. T. Fleming
     \\ Center for Systems Biology,
        University of Iceland
     \\ Ph: +354 618 6245, Email: ronan.mt.fleming@gmail.com
     \\ Sturlugata 8, Reykjavik 101, Iceland.
     \\[8pt] C. M. Maes
     \\ Institute for Computational and Mathematical Engineering,
        Stanford University
     \\[8pt] M. A. Saunders and Y. Ye
     \\ Department of Management Science and Engineering,
        Stanford University
     \\[8pt] B. \O.\ Palsson
     \\ Department of Bioengineering,
        University of California, San Diego}

\date{\today}

\begin{abstract}
  We derive a convex optimization problem on a steady-state
  nonequilibrium network of biochemical reactions, with the property
  that energy conservation and the second law of thermodynamics both
  hold at the problem solution. This suggests a new variational
  principle for biochemical networks that can be implemented in a
  computationally tractable manner.  We derive the Lagrange dual
    of the optimization problem and use strong duality
    to demonstrate that a biochemical analogue of Tellegen's
    theorem holds at optimality. Each optimal flux is dependent on a
    free parameter that we relate to an elementary kinetic parameter
    when mass action kinetics is assumed.
      
  

  \smallskip

  Keywords: constraint-based modeling, flux balance analysis,
            thermodynamics, convex optimization, entropy function
\end{abstract}

\maketitle

\section{Introduction}

The biochemical system of any organism can be represented
mathematically by a network of chemicals (nodes) and reactions
(edges).  To analyze these networks at genome scale, systems
biologists often use a linear optimization technique called flux
balance analysis (FBA) \citep{Savinell1992c}. Flux balance requires
that the sum of fluxes into and out of each node in the network be
zero. This is equivalent to Kirchhoff's current law in an electrical
network. Recent work has sought to augment flux balance analysis with
Kirchhoff's loop law for energy conservation as well as the second law
of thermodynamics
\citep{beard2002eba, looplaw, yang2005aip, nagrath2007iea,
fleming2008stk, schellenberger2011elimination}.

The incorporation of thermodynamic constraints into genome-scale
models has produced models that are biologically more realistic and
reveal greater insight into the control mechanisms operating in these
complex biological systems \citep{beard2002eba, kummel2006prs,
  xu2008optimization, Garg2010, liebermeister2010modular}. See
\citep{soh2010network} for a recent broad review of the application of
thermodynamic constraints to biochemical networks. The second law of
thermodynamics may be applied to each reversible elementary reaction
by constraining net reaction flux to be in the direction of a negative
change in chemical potential \citep{burton1957energy}.  Constraints on
net reaction flux can be incorporated within flux balance analysis as
additional linear inequality constraints \citep{fleming2009qas}.
Software packages to quantitatively assign reaction directionality for
genome-scale models of metabolism are available
\citep{fleming2011bertalanffy,cobraV2}.

In contrast to the addition of constraints arising from the second law
of thermodynamics, the addition of energy conservation constraints has
been problematic because the resulting equations are nonlinear and/or
nonconvex. Previous attempts required computing the global solution of
a nonconvex continuous optimization problem
\citep{beard2002eba,nagrath2007iea,fleming2008stk}, solving an NP-hard
problem \citep{yang2005aip}, or solving a mixed integer linear program
\citep{henry2007tbm,schellenberger2011elimination}.
Mixed integer programs have unpredictable computational complexity.


The purpose of this work is to show that Kirchhoff's loop law and the
second law of thermodynamics arise naturally from the optimality
conditions of a convex optimization problem with flux balance
constraints.  Furthermore, every set of reaction fluxes that satisfies
Kirchhoff's loop law and the second law of thermodynamics must be
optimal for some instance of this problem.  This suggests that there
is an underlying variational principle operating in biochemical
networks. Our convex optimization formulation leads to
 polynomial-time algorithms \citep{Ye1997} for computing steady state fluxes 
 that also satisfy energy conservation and the
second law of thermodynamics.

\section{Linear resistive networks}

Consider a simple electrical circuit consisting of current
sources, batteries, and resistors, as illustrated in Figure \ref{circuit}.
\begin{figure}[!hb]
\vspace{15pt}
\centering
\vspace*{-3ex}
\psfrag{x1}{\large$x_1$}
\psfrag{x2}{\large$x_2$}
\psfrag{x3}{\large$x_3$}
\psfrag{x4}{\large$x_4$}
\psfrag{b1}{\large$b_1$}
\psfrag{f1}{\large$f_1$}
\psfrag{y1}{\large$y_1$}
\psfrag{y2}{\large$y_2$}
\psfrag{y3}{\large$y_3$}
\psfrag{y4}{\large$y_4$}
\scalebox{.8}{\includegraphics{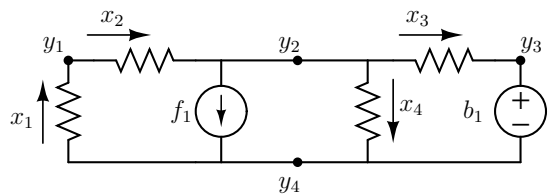}}
\vspace{-5pt}
\caption{A linear resistive network with currents $x$, potentials $y$,
  batteries $b$, and current sources $f$.}
\label{circuit}
\end{figure}
This is a linear resistive network with $m$ nodes and
$n$ edges, where the node variables $y \in \R^m$ represent
potentials and the edge variables $x \in \R^n$ represent flows (or
currents) in the network. The circuit topology is defined by a
node-edge incidence matrix $A \in \dims{m}{n}$, and properties of the
network are encoded in a set of data vectors: $f \in \R^m$ is a vector
of current sources, $b \in \R^n$ is a vector of batteries, and $r \in
\R^n$ is a vector of resistances ($r > 0$).

To solve for the voltages and currents in the circuit we use three
fundamental laws:
Kirchhoff's current law (KCL) $A x = f$,
Kirchhoff's voltage (or loop) law (KVL) $w = b + A^T y$,
and Ohm's law $w = Rx$ (where  $w \in \R^n$ is a vector of voltages
and $R = \diag(r)$ is a positive-definite diagonal matrix).
Maxwell's minimum heat theorem \citep{maxwell1873treatise} is a
variational principle that underlies this circuit. It seeks a set of
currents that minimize the heat (or power) dissipated subject to
KCL. This is the convex optimization problem \citep{strang1986introduction}
\begin{equation}
\begin{array}{ll}
   \minimize & F(x) \equiv \frac{1}{2} x^T R x - b^T x
\\ \st       & A x = f \quad : y
\end{array}
\tag{QP}
\label{linearresistive}
\end{equation}
where the node variables $y$ are Lagrange multipliers for the
equality constraints.  The optimality conditions 
$\grad F(x) = A^T y$ yield equations that enforce KVL and Ohm's Law, and 
the optimal variables $x^\star$ and $y^\star$ are a set of consistent
potentials and currents for the circuit. 

Biochemical networks are significantly more complicated than linear
resistive networks.  However, some of the same underlying network
concepts apply \citep{oster1971nt}. In this work we construct an
optimization problem in a form similar to problem (QP), where the
potentials are Lagrange multipliers for an equality constraint on the
flow variables, and the optimality conditions of the problem yield
equations that enforce Kirchhoff's loop law and the second law of
thermodynamics. Previous work noted an ``analogy'' between Lagrange
multipliers and chemical potentials in flux balance analysis
\citep{warren2007dta}.  The key limitation of that work,
  concerned with duality in linear optimization, is that the
  optimality conditions of a linear optimization problem cannot
  enforce any relationship between net flux and change in chemical
  potential. In the present work, we establish a quantitative relation
  between the Lagrange multipliers and classical chemical potential
  for a nonlinear, yet convex, optimization problem in a framework
  consistent with classical thermodynamics.

\section{Biochemical networks}

The mathematical representation of a biochemical network is the
\emph{stoichiometric matrix} $S$.  Like $A$ above, $S \in \dims{m}{n}$
is a sparse incidence matrix that encodes the network
topology. However, biochemical networks, unlike linear resistive
networks, are nonlinear networks or hypergraphs. That is, a single
edge may link many nodes to many nodes, and the entries in $S$, which
are integer stoichiometric coefficients, are not confined to the set
$\{-1,0,1\}$. Each row of $S$ corresponds to an individual chemical
compound, and each column of $S$ corresponds to an individual
elementary reaction.  In practice, $m < n$ and $S$ does not have
full row-rank.  A model of a system is called genome-scale if a large
proportion of the system's genes are represented.  In current
genome-scale models of the metabolic system of \emph{E. coli}, $m$ and
$n$ are several thousand.

\emph{Flux balance analysis} (FBA) computes a set of fluxes that
satisfy steady state mass-conservation constraints and are optimal for
a biological objective function \citep{Savinell1992c,orth2010flux}.  A
\emph{flux} is a reaction rate; it represents flow through the network
and is analogous to current in an electrical circuit. We denote the
net flux of the $j$th reaction by the variable $v_j \in \R$. The
concentration of the $i$th chemical in the network is denoted $x_i \in
\R$.  A fundamental equation in flux balance analysis is the dynamic
mass conservation equation---a differential equation relating the
change in chemical concentration to reaction fluxes via the
stoichiometric matrix:
\begin{equation*}
   S v = \frac {dx}{dt}. 
\label{svequualdxdt}
\end{equation*}
Here $x \in \R^m$ is a vector of chemical concentrations and $v \in
\R^n$ is a vector of net reaction fluxes. Each row of this vector
equation states that the rate of change in concentration for a
chemical is the sum of fluxes that synthesize or degrade that
chemical. 

So far we have considered net flux for a stoichiometric matrix of
reactions, each of which conserves mass. A living biochemical system
operates in a \emph{nonequilibrium state} \citep{qian2007peh} and
exchanges mass with its surroundings.  This can be modeled by
including exchange reactions that do not conserve mass.  The model is
augmented with a matrix $S_e \in \dims{m}{k}$ and a corresponding set
of \emph{exchange fluxes} $v_e$, which are sources and sinks of
chemicals and are analogous to current sources in an electrical
network.  
In contrast to the columns of $S$, each column of $S_e$ 
corresponds to a reaction that does not conserve mass.
If we assume that the biochemical system is operating at a
steady state, then the concentrations of chemicals within the system
remain constant. Thus, we have
\begin{equation}
   S v + S_e v_e = \frac {dx}{dt} \equiv 0. 
\label{svequalzero}
\end{equation}
This equation is analogous to Kirchhoff's current law in an electrical network.
Henceforth we assume that there exists a feasible solution $(v,v_e)$ for
\eqref{svequalzero}.
Ensuring the existence of a steady state flux is part of a
quality control process during reconstruction of $S$ and $S_e$
from experimental literature \citep{thieleTests}.

Any reversible reaction $A + 2B \rightleftharpoons C$ may be split
into two one-way reactions: forward ($A + 2B \to C$) and reverse ($A +
2B \leftarrow C$). To distinguish between forward, reverse, and
exchange reactions we split the augmented stoichiometric matrix into
three components: $[S \,\ {-S} \,\ S_e]$.  Here $S$ contains all the
columns corresponding to forward reactions, and $-S$ contains all
columns corresponding to the reverse reactions.  The fluxes for these
one-way reactions form the vectors $v_f$ and $v_r$. The \emph{net
  flux} is then $v = v_f - v_r$. 


Flux balance analysis has been implemented using linear programming \citep{Pal06}.
We take the flux balance analysis problem to be
\begin{equation}
\begin{array}{ll}
   \underset{v_f,v_r,v_e}{\maximize} & d^T v_e
\\ \st       & S v_f - S v_r +  S_e v_e = 0
\\           & v_f, v_r \ge 0, \quad \l \le v_e \le h.
\end{array}
\tag{FBA}
\label{fba}
\end{equation}
Often, the lower bounds $\l$ and upper bounds $h$ on the exchange
fluxes come from laboratory measurements (\emph{e.g.}\ the uptake of
glucose in a particular culture of \ecoli).  The vector $d$ is chosen
to optimize a biological objective (\emph{e.g.}\ maximizing
replication rate in unicellular organisms). Note that flux balance
analysis does not explicitly solve for $v_f$ and $v_r$ but rather $v =
v_f - v_r$.

We now prove a lemma about alternative optimal solutions to problem
(FBA).

\begin{lemma}\label{feaslemma}
If there is an optimal solution to problem (FBA), there is
an optimal solution with strictly positive internal fluxes $v_f$ and
$v_r$.
\end{lemma}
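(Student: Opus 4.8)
The plan is to start from any optimal solution $(\bar v_f,\bar v_r,\bar v_e)$ of \eqref{fba} and modify the internal fluxes by adding a strictly positive vector to both $v_f$ and $v_r$, chosen so that the net flux $v=v_f-v_r$ and the exchange fluxes $v_e$ are unchanged. Concretely, for any vector $w\in\R^n$ we have $S(\bar v_f+w)-S(\bar v_r+w)+S_e\bar v_e = S\bar v_f - S\bar v_r + S_e\bar v_e = 0$, so flux balance is preserved, and the objective $d^T v_e$ is untouched. Hence the new point is feasible and optimal provided $\bar v_f+w\ge 0$ and $\bar v_r+w\ge 0$; taking $w$ with all entries equal to some $\epsilon>0$ makes both shifted vectors strictly positive, since $\bar v_f,\bar v_r\ge 0$ already and adding $\epsilon>0$ to each nonnegative entry yields something $\ge\epsilon>0$.

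First I would record the observation of the previous paragraph as the core identity: adding the same column vector $w$ to $v_f$ and to $v_r$ changes neither $Sv_f-Sv_r$ nor $v_e$, because $S$ enters the constraint only through the difference $v_f-v_r$. Second I would invoke the hypothesis that an optimal solution exists, call it $(\bar v_f,\bar v_r,\bar v_e)$, with optimal value $d^T\bar v_e$. Third I would define $v_f:=\bar v_f+\epsilon\,\mathbbm{1}$, $v_r:=\bar v_r+\epsilon\,\mathbbm{1}$, $v_e:=\bar v_e$ for any fixed $\epsilon>0$ (where $\mathbbm{1}\in\R^n$ is the all-ones vector), and verify the three feasibility conditions of \eqref{fba}: the equality constraint holds by the core identity; the bounds $\l\le v_e\le h$ hold because $v_e=\bar v_e$; and $v_f,v_r>0$ holds entrywise because each entry of $\bar v_f$ and $\bar v_r$ is $\ge 0$ and we add $\epsilon>0$. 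Fourth, since the objective depends only on $v_e=\bar v_e$, the new point attains the same (optimal) objective value, so it is an optimal solution with strictly positive internal fluxes.

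There is essentially no obstacle here: the only thing to be slightly careful about is that $\epsilon$ is a single fixed positive number (not something that must be sent to zero or chosen to satisfy a competing constraint), so no limiting argument or case analysis is needed. One could also note that the same argument shows there is in fact a whole ray of optimal solutions parametrized by $\epsilon>0$, which foreshadows the ``free parameter'' attached to each optimal flux mentioned in the abstract; but for the lemma as stated, exhibiting one such $\epsilon$ suffices.
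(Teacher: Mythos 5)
Your proof is correct and follows essentially the same route as the paper's: shift both $v_f$ and $v_r$ by a fixed positive multiple of the all-ones vector, observe that the equality constraint depends only on the difference $v_f - v_r$ so feasibility is preserved, and note that the objective depends only on the unchanged $v_e$ so optimality is preserved. No gaps.
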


\begin{proof} 
   Let $(v_f^\star, v_r^\star, v_e^\star)$ be an optimal solution to
   problem (FBA). It could be the case that one or more components of
   $v_f^\star$ or $v_r^\star$ are zero; that is, they are sitting on
   their lower bounds. We want to show that we can construct a new
   optimal solution $(v_f^o,v_r^o,v_e^\star)$ with strictly positive
   fluxes $v_f^o$ and $v_r^o$. To do this, let $v_f^o = v_f^\star +
   \alpha e$ and $v_r^o = v_r^\star + \alpha e$, where $\alpha$ is any
   positive scalar and $e \in \R^n$ is the vector of all ones.  The
   internal fluxes $(v_f^o, v_r^o)$ are strictly positive, and the set
   of fluxes $(v_f^o, v_r^o, v_e^\star)$ is feasible because $\l \le
   v_e^\star \le h$ and
 \begin{align*}
    S v_f^o - S v_r^o + S_e v_e
         &= S (v_f^\star + \alpha e) - S (v_r^\star + \alpha e) + S_e v_e^\star
 \\      &= S v_f^\star - S v_r^\star + S_e v_e^\star = 0.
 \end{align*}
 Finally, the set of fluxes $(v_f^o, v_r^o, v_e^\star)$ is optimal
 because it has the same objective value $d^T v_e^\star$ as the
 optimal solution $(v_f^\star, v_r^\star, v_e^\star)$.  Here we used
 the fact that the objective function in problem (FBA) depends only on
 the exchange fluxes $v_e$; not on the internal fluxes $v_f$ and
 $v_r$.
\end{proof}

\section{Thermodynamic constraints}

Flux balance analysis predicts fluxes that satisfy steady state mass
conservation but not necessarily energy conservation or the second law
of thermodynamics. Whilst the domain of steady state mass conserved
fluxes includes those that are thermodynamically feasible, additional
constraints are required in order to guarantee a flux that
additionally satisfies energy conservation and the second law of
thermodynamics. Recent work has tried to add constraints based on
Kirchhoff's loop law and the second law of thermodynamics to problem
(FBA) \citep{beard2002eba,looplaw,nagrath2007iea}. However, these
constraints are problematic because they are nonlinear and
nonconvex. We now describe these constraints.

The loop law for chemical potentials in a biochemical network is
directly analogous to Kirchhoff's voltage law for electrical
circuits. It states that the stoichiometrically weighted sum of
chemical potentials around any closed loop of chemical reactions is
zero. We explicitly model a chemical potential $u_i \in \R$ for each
of the chemicals in the network, and we define the \emph{change in
  chemical potential} for all internal reactions in the network as the
vector
\begin{equation}
  \Delta u \equiv S^T u \quad (\in \R^n),
  \label{eq:Deltau}
\end{equation}
where $u \in \R^m$ is the vector of chemical potentials. This ensures
that Kirchhoff's loop law is satisfied, as energy conservation
requires that an injective relation exist between each row of $S$ and
a chemical potential \citep{planck1945tt}. It follows that the change
in chemical potential around a stoichiometrically balanced loop will
be zero.

Assuming mass-action kinetics, constant temperature and pressure, and
uniform spatial concentrations (\emph{i.e.}\ a well mixed system), it
is known (\emph{e.g.}\ \citep{ross2008taf}) that the change in
chemical potential may be expressed in terms of elementary one-way
reaction rates as
\begin{equation}
  \Delta u = \rho\log\left(v_r \,./\,v_f\right),
\label{logelementaryforwardreverse}
\end{equation}
where $./$ denotes component-wise division of vectors, and $\rho=RT >
0$ is the gas constant multiplied by temperature.  Equation
\eqref{logelementaryforwardreverse} leads directly to a macroscopic
(long-term) application of the second law of thermodynamics:
\[
   -\Delta u_j \, v_j = -\Delta u_j \, (v_{fj}-v_{rj}) \ge 0,
\]
which says that the net flux for each elementary reaction must be down
a gradient of chemical potential, that the system must dissipate heat,
and that entropy must increase as a result of work being done on the
system through the exchange fluxes \citep{qian2007peh}. The total heat
dissipation rate of the biochemical system in a non-equilibrium steady
state is given by $-\Delta u^T v \ge 0$.  Henceforth we consider each
flux as a dimensionless quantity. Dimensioned flux can be rendered
dimensionless by division with a standard flux of the same
dimensions. This standard may be defined according to a convenient
timescale in the same way that a standard concentration may be defined
according to a convenient abundance scale.

Under the specified assumptions, we now define a thermodynamically
feasible flux.

\smallskip

\begin{definition}
  For a network described by an augmented stoichiometric matrix $[S \
  \, {-S} \ \ S_e]$ with a given set of exchange fluxes $v_e$, a set
  of \emph{thermodynamically feasible fluxes} is a pair of internal
  flux vectors $(v_f,v_r) > 0$ that satisfy steady-state mass-balance,
\begin{equation}
  S v_f - S v_r = -S_e v_e, \label{massbalance}
\end{equation}
and for which there exists an underlying vector of chemical
potentials $u \in \R^m$ that satisfies \eqref{eq:Deltau}
and \eqref{logelementaryforwardreverse}:
\begin{equation}
\Delta u \equiv S^T u = \rho \log\left(v_r \,./\, v_f\right)
                        = \rho\log v_r - \rho\log v_f.
\label{fluxpotential}
\end{equation}
\end{definition}

\section{A variational principle}

We now present the main theorem of this paper. The theorem introduces
a new convex optimization problem with the same flux balance
constraints as problem (FBA) but with a negative entropy objective
function. It states that the thermodynamic constraints
\eqref{massbalance} and \eqref{fluxpotential} hold at its unique
solution.  We use $e$ to denote a vector of ones.

\begin{theorem}\label{optimalthmfeas}
  Let $v_e^\star$ be any set of optimal exchange fluxes from problem
  (FBA).  Define $b = -S_e v_e^\star$, and let $c$ be any vector in
  $\R^n$.  The convex equality-constrained problem
\begin{equation}
\begin{array}{ll}
   \underset{v_f,v_r > 0}{\minimize}
              & \phi \equiv v_f^T (\log(v_f) + c - e)
                          + v_r^T (\log(v_r) + c - e)
\\ \st        & S v_f - S v_r =  b \quad : y
\end{array}
\tag{EP}
\label{cvxeqp}
\end{equation}
is then feasible, and its solution $(v_f^\star, v_r^\star)$ is a set
of thermodynamically feasible internal fluxes.  The combined vector
$(v_f^\star,v_r^\star,v_e^\star)$ is thermodynamically feasible and
optimal for problem (FBA).  The associated chemical potentials $u$ may
be obtained from the optimal Lagrange multiplier $y^\star \in \R^m$
for the equality constraints according to $u = -2\rho y^\star$.
\end{theorem}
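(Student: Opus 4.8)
The plan is to prove, in order: (a) that problem \eqref{cvxeqp} is feasible; (b) that it has a unique minimizer $(v_f^\star,v_r^\star)$ lying in the open domain $\{(v_f,v_r)>0\}$; (c) that the KKT conditions at that minimizer, under the substitution $u=-2\rho y^\star$, reproduce exactly the potential relation \eqref{fluxpotential} (primal feasibility being \eqref{massbalance}); and (d) that $(v_f^\star,v_r^\star,v_e^\star)$ is feasible and optimal for \eqref{fba}. Step (a) I would get directly from Lemma~\ref{feaslemma}: since \eqref{fba} has an optimal solution, it has one, $(\hat v_f,\hat v_r,v_e^\star)$, with $\hat v_f,\hat v_r>0$, and then $S\hat v_f-S\hat v_r=-S_ev_e^\star=b$ exhibits a strictly positive feasible point of \eqref{cvxeqp}.

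For (b) I would write $\phi$ as a separable sum of terms $\psi_j(t)=t\log t+(c_j-1)t$ over the coordinates of $v_f$ and $v_r$. Each $\psi_j$ is strictly convex on $(0,\infty)$ and bounded below there (by $-e^{-c_j}$), so $\phi$ is strictly convex and bounded below on the feasible set. Extending $\phi$ to the closed orthant by $0\log0=0$, I would show the sublevel set $\{(v_f,v_r)\ge0:Sv_f-Sv_r=b,\ \phi\le\phi(\hat v_f,\hat v_r)\}$ is compact: it is closed, and it is bounded because $\phi$ is at least $\psi_{j^\star}(t_{\max})$ minus a fixed constant, where $t_{\max}$ is the largest coordinate with index $j^\star$, and $\psi$ grows superlinearly (uniformly over the finitely many $j$). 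Weierstrass then gives a minimizer $(v_f^\star,v_r^\star)$ on this set. Finally I would rule out a zero coordinate at the minimizer: the direction $(v_f,v_r)\mapsto(v_f+\alpha e,v_r+\alpha e)$ preserves feasibility (exactly as in the proof of Lemma~\ref{feaslemma}, since $S$ annihilates the added $\alpha e$ terms in the difference) and changes $\phi$ by $\alpha\log\alpha+O(\alpha)$ in each vanishing coordinate and $O(\alpha)$ in the others, which is strictly negative for small $\alpha>0$ unless no coordinate vanishes. Hence $(v_f^\star,v_r^\star)>0$, it solves \eqref{cvxeqp}, and strict convexity makes it the unique solution.

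For (c), since the minimizer lies in the open domain where $\phi$ is differentiable and the constraints are affine, the KKT conditions hold there and characterize the solution. With $L=\phi(v_f,v_r)-y^T(Sv_f-Sv_r-b)$ and $\grad_{v_f}\phi=\log v_f+c$, $\grad_{v_r}\phi=\log v_r+c$, stationarity gives $\log v_f^\star+c=S^Ty^\star$ and $\log v_r^\star+c=-S^Ty^\star$; subtracting and multiplying by $\rho$ yields $\rho(\log v_r^\star-\log v_f^\star)=S^T(-2\rho y^\star)$. Setting $u\equiv-2\rho y^\star$ and using $\Delta u\equiv S^Tu$ from \eqref{eq:Deltau}, this is $\Delta u=\rho\log(v_r^\star\,./\,v_f^\star)$, i.e.\ \eqref{fluxpotential}; together with primal feasibility $Sv_f^\star-Sv_r^\star=b=-S_ev_e^\star$ (which is \eqref{massbalance}) and $(v_f^\star,v_r^\star)>0$, this certifies that $(v_f^\star,v_r^\star)$ is a set of thermodynamically feasible internal fluxes for $v_e^\star$. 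For (d), $(v_f^\star,v_r^\star,v_e^\star)$ obeys $Sv_f^\star-Sv_r^\star+S_ev_e^\star=b+S_ev_e^\star=0$, has $v_f^\star,v_r^\star>0$ and $\l\le v_e^\star\le h$ (the latter because $v_e^\star$ is \eqref{fba}-feasible), hence is feasible for \eqref{fba}, and its objective $d^Tv_e^\star$ equals the optimal value, so it is optimal.

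The step I expect to be the main obstacle is (b): $\phi$ is neither coercive on the feasible set in the usual sense---approaching the boundary of the positive orthant keeps $\phi$ bounded---nor even finite there, so existence of a minimizer has to be extracted on a compact sublevel set of the continuous extension of $\phi$, and the minimizer then has to be pushed into the interior by exploiting the infinite negative slope of $t\log t$ at the origin along the feasibility-preserving direction $(e,e)$. Once strict positivity of the minimizer is in hand, writing the Lagrangian conditions and identifying $u=-2\rho y^\star$ is routine.
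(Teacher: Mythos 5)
Your proposal is correct and its skeleton matches the paper's proof: feasibility of \eqref{cvxeqp} comes from Lemma~\ref{feaslemma} applied to an optimal (FBA) solution, the stationarity conditions $\log v_f^\star + c = S^T y^\star$ and $\log v_r^\star + c = -S^T y^\star$ are subtracted to give $S^T(-2y^\star)=\log(v_r^\star\,./\,v_f^\star)$, and $u=-2\rho y^\star$ then yields \eqref{fluxpotential} alongside \eqref{massbalance}, with optimality for \eqref{fba} following because the objective $d^Tv_e^\star$ is untouched. Where you genuinely depart from the paper is step (b). The paper simply asserts that strict convexity plus boundedness below gives a unique optimal solution, which as stated is not sufficient for attainment (a strictly convex function bounded below need not achieve its infimum, and here $\phi$ is finite on the boundary of the orthant while the interior constraint $v_f,v_r>0$ is open); the paper only returns to attainability informally in the later section on the dual. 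Your argument closes this gap properly: the separable decomposition into $\psi_j(t)=t\log t+(c_j-1)t$ with the lower bound $-e^{-c_j}$ gives superlinear growth and hence compact sublevel sets of the continuous extension, Weierstrass gives a minimizer on the closed orthant, and the feasibility-preserving direction $(e,e)$ together with the $\alpha\log\alpha$ singularity of the entropy term pushes the minimizer off the boundary, after which the unconstrained-in-the-interior KKT analysis is legitimate. This buys a complete existence proof at the cost of a page of analysis; the paper's version buys brevity at the cost of a step that a careful reader must reconstruct exactly as you have done.
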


\begin{proof} 
  First note that the constraints $v_f, v_r > 0$ are implied by the
  domain of the logarithm; they have no associated Lagrange
  multipliers. From Lemma \ref{feaslemma}, we know that if $v_e^\star$
  is optimal for problem (FBA) and $b = -S_e v_e^\star$, there must be
  corresponding positive internal fluxes $v_f$ and $v_r$ that satisfy
  $S v_f - S v_r = b$. Therefore, problem (EP) with this choice of $b$
  is always feasible.

  Define the objective function as $\phi( v_f,v_r)$ and note that it
  is strictly convex because $\grad^2 \phi(v_f,v_r)$ is
  positive-definite for all $v_f,v_r > 0$, and it is bounded below.
  Problem (EP) is thus a convex linear equality-constrained problem
  with a unique optimal solution $(v_f^\star,v_r^\star$) that
  satisfies the optimality conditions
\begin{align}
   S^T y^\star &= \grad_{v_f}\phi = \log(v_f^\star) + c, \label{dualpos}
\\-S^T y^\star &= \grad_{v_r}\phi = \log(v_r^\star) + c, \label{dualneg}
\\ S v_f^\star - S v_r^\star &= b                       \label{primal}
\end{align}
for some vector $y^\star$ (which is not unique because $S$ has
low row rank).
Subtracting \eqref{dualpos} from \eqref{dualneg} gives $S^T(-2
y^\star) = \log(v_r^\star \,./\, v_f^\star)$. Taking $u = -2\rho
y^\star$ we see from \eqref{fluxpotential} that $(v_f^\star,
v_r^\star)$ is a pair of thermodynamically feasible fluxes with
underlying chemical potentials $u$ for the exchange fluxes
$v_e^\star$. The combined vector $(v_r^\star,v_f^\star,v_e^\star)$ is
feasible for problem (FBA), and is optimal because the objective $d^T
v_e^\star$ is unchanged.
\end{proof}

In summary, to compute an optimal solution
$(v_f^\star,v_r^\star,v_e^\star)$ to problem (FBA) that is
thermodynamically feasible, perform the following steps: solve problem
(FBA) to find an optimal exchange flux vector $v_e^\star$, form $b =
-S_e v_e^\star$, choose a vector $c$, and solve problem (EP) to find
$v_f^\star,v_r^\star > 0$.  Since $S$ is row rank deficient,
problem (EP) defines $\Delta u$ uniquely, but not the part of $u$ in
the nullspace of $S^T$. 
Observe that if $\phi( v_f,v_r)$ were a linear function of flux,
  one could not quantitatively relate flux to Lagrange multipliers at
  optimality, as primal and dual variables do not appear in the same
  optimality condition. This is the key limitation of previous efforts
  to draw an analogy between chemical potential and Lagrange
  multipliers to constraints in a linear optimization problem
  \citep{warren2007dta}.

Note the structural similarity of problems (QP) and (EP). In both
cases the primal variables are the flows in the networks, the
constraints impose Kirchhoff's current law, and the dual variables for
these constraints are the potentials in the network. There is a clear
physical interpretation of the objective function in problem (QP) and
a variational principle in operation. We believe there must also be a
variational principle in operation in biochemical networks. 
In an abstract analysis of nonlinear resistive networks
  \citep{millar1951cxvi}, one would refer to the objective in (EP) as
  the \emph{resistive content}, as it is the sum of the areas under
  the \emph{constitutive relationships between flux and change in
  potential, for each reaction.}  From an information theoretic
  perspective, by
maximizing the entropy of the internal fluxes, problem (EP) gives
the most unbiased prediction \citep{Jaynes03} of
internal elementary fluxes, subject to mass-conservation constraints
and boundary conditions imposed by exchange fluxes.

The next theorem proves that if there is a set of thermodynamically
feasible fluxes in a biochemical network, it \emph{must} be the
solution to an optimization problem in the form of problem (EP).

\begin{theorem}\label{thmfeas}
  Every set of thermodynamically feasible fluxes $v_f$, $v_r$ (and
  chemical potentials $u$) is the solution (and corresponding Lagrange
  multiplier) of a convex optimization problem in the form of problem
  (EP).
\end{theorem}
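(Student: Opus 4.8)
The plan is to recover the free parameter $c$ (and the multiplier $y^\star$) directly by inverting the optimality conditions \eqref{dualpos}--\eqref{primal} of problem (EP). Suppose $(v_f,v_r) > 0$ are thermodynamically feasible for a network with a given set of exchange fluxes $v_e$, so that \eqref{massbalance} holds and there is a vector $u \in \R^m$ with $S^T u = \rho\log(v_r\,./\,v_f)$ as in \eqref{fluxpotential}. I would set the data of the target problem to be
\[
   b \equiv S v_f - S v_r = -S_e v_e, \qquad
   c \equiv -\tfrac12\bigl(\log v_f + \log v_r\bigr),
\]
and take as candidate Lagrange multiplier $y^\star \equiv -u/(2\rho)$. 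This choice is essentially forced: adding \eqref{dualpos} and \eqref{dualneg} eliminates $y^\star$ and fixes $c$ to the value above, while subtracting them gives $S^T(-2y^\star) = \log(v_r\,./\,v_f)$, which is consistent with the defining relation \eqref{fluxpotential}.

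Next I would verify that $(v_f,v_r,y^\star)$ satisfies \eqref{dualpos}--\eqref{primal} with this data. Condition \eqref{primal} is immediate from the definition of $b$. For \eqref{dualpos},
\[
   \grad_{v_f}\phi(v_f,v_r) = \log v_f + c
      = \tfrac12\bigl(\log v_f - \log v_r\bigr)
      = -\tfrac{1}{2\rho}\,S^T u = S^T y^\star,
\]
where the third equality uses \eqref{fluxpotential}; condition \eqref{dualneg} follows identically after negation. The hypothesis enters only here: thermodynamic feasibility guarantees $\rho\log(v_r\,./\,v_f)\in\range(S^T)$, which is exactly the condition that makes $\grad\phi$ lie in the range of the constraint matrix $[\,S\ \ {-S}\,]^T$ and hence makes $(v_f,v_r)$ a KKT point of problem (EP).

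Finally I would invoke the convexity established in the proof of Theorem~\ref{optimalthmfeas}: since $\phi$ is strictly convex and the constraints are linear, any feasible point satisfying \eqref{dualpos}--\eqref{primal} is the unique global minimizer of problem (EP). Hence $(v_f,v_r)$ is \emph{the} solution of this instance of problem (EP), and $u = -2\rho y^\star$ is recovered from the corresponding Lagrange multiplier, which proves the theorem. To match problem (EP) verbatim, including the clause that $v_e$ be optimal for problem (FBA), one may simply take the (FBA) objective vector $d = 0$, so that every feasible $v_e$ is optimal and $b = -S_e v_e$ has the required form. I do not expect a serious obstacle here: the argument is a one-line inversion of the optimality conditions, and the only real content is that the range condition on $S^T$ built into the definition of thermodynamic feasibility is precisely the solvability condition for the two dual equations. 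The single point needing a word of care is that $y^\star$ is not unique when $S$ is row-rank deficient, but existence of one multiplier suffices and every admissible choice produces the same $\Delta u = S^T u$.
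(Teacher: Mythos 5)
Your proposal is correct and follows essentially the same route as the paper: you choose $c = -\tfrac12(\log v_f + \log v_r)$ and $y^\star = -u/(2\rho)$ and verify that the optimality conditions \eqref{dualpos}--\eqref{primal} hold, exactly as in the paper's proof. The extra remarks (why the choice of $c$ is forced, uniqueness via strict convexity, and the non-uniqueness of $y^\star$) are accurate but not needed beyond what the paper already establishes.
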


\begin{proof} 
  We show how to choose the vector $c \in \R^n$ so that the given
  $v_f$, $v_r$, and $u$ are optimal. Since $v_f$ and $v_r$ satisfy
  \eqref{massbalance}, they are feasible. From
  \eqref{fluxpotential} we have $S^T u = \rho \log(v_r \,./\, v_f)$.
  Letting $y = -\frac{1}{2\rho} u$ we have $u = -2\rho y$ and hence
\[
  -2 S^T y = \log(v_r) - \log(v_f).
\]
If we take $2c =  -\log(v_r) - \log(v_f)$, this gives
\begin{align*}
    2 S^T y &= 2 \log(v_f)  + 2 c
\\ -2 S^T y &= 2 \log(v_r) + 2 c,
\end{align*}
which with \eqref{massbalance} are the optimality conditions for
problem (EP) as given by \eqref{dualpos}--\eqref{primal}. Therefore,
with $c = - \frac{1}{2} \log(v_r) - \frac{1}{2} \log(v_f)$, the given
$v_f$ and $v_r$ are the optimal solution, with Lagrange multiplier
$y=-\frac{1}{2\rho} u$.
\end{proof}

The vector $c$ in problem (EP) is a set of free parameters.  Theorem
\ref{optimalthmfeas} tells us that regardless of the value of $c$,
(EP) has a unique solution that is a thermodynamically
feasible flux. Theorem \ref{thmfeas} states that given any
thermodynamically feasible flux there exists at least one associated
vector $c$.

\section{Dual variational principle}

Given the primal optimization problem (EP) it is instructive to
consider the equivalent dual optimization problem, as its objective
function lends insight into the properties of the optimal dual
variables for the primal problem.

By Lemma~\ref{feaslemma}, the constraints of Problem (EP) have feasible
solutions that are strictly positive.  Similarly, there must be
feasible solutions that are finite, and these have finite objective
values.  Hence the optimum of the strictly convex objective function
of problem (EP) is finite and attainable.

\smallskip

\begin{proposition}
  The Lagrange dual of problem (EP) is the unconstrained
  convex optimization problem
\begin{equation}
  \maximize_y \ \psi(y) \equiv b^Ty -e^T\exp(S^Ty-c) - e^T\exp(-S^Ty-c).
  \tag{DEP}
  \label{eq:dualEP}
\end{equation}
\end{proposition}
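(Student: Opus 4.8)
The plan is to construct the Lagrangian of problem (EP) with a multiplier $y \in \R^m$ for the single linear equality constraint $S v_f - S v_r = b$, and then to obtain the dual function by minimizing the Lagrangian over the open orthant $v_f, v_r > 0$. Writing $\phi_f(v_f) = v_f^T(\log(v_f) + c - e)$ and $\phi_r(v_r) = v_r^T(\log(v_r) + c - e)$, the Lagrangian is
\[
  L(v_f, v_r, y) = \phi_f(v_f) + \phi_r(v_r) - y^T\big(S v_f - S v_r - b\big)
   = b^T y + \big[\phi_f(v_f) - (S^T y)^T v_f\big] + \big[\phi_r(v_r) + (S^T y)^T v_r\big],
\]
so the infimum over $(v_f, v_r)$ splits into independent minimizations over $v_f$ and over $v_r$, each of which further decouples componentwise. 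Hence everything reduces to the scalar problem $\inf_{t > 0}\, g_a(t)$ with $g_a(t) = t(\log t + c_j - 1) - a t$, where $a = (S^T y)_j$ for the forward block and $a = -(S^T y)_j$ for the reverse block.

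Next I would solve this scalar problem. Since $g_a''(t) = 1/t > 0$, $g_a$ is strictly convex on $(0,\infty)$; it is coercive ($g_a(t) \to +\infty$ as $t \to \infty$, while $g_a(t) \to 0$ as $t \to 0^+$), so its unique stationary point $t^\star = \exp(a - c_j)$, obtained from $g_a'(t) = \log t + c_j - a = 0$, is the global minimizer, with minimum value $g_a(t^\star) = -\exp(a - c_j)$. Substituting componentwise — $a = (S^T y)_j$ in the forward block, $a = -(S^T y)_j$ in the reverse block — gives
\[
  \inf_{v_f, v_r > 0} L(v_f, v_r, y)
  = b^T y - e^T\exp(S^T y - c) - e^T\exp(-S^T y - c) = \psi(y),
\]
finite for every $y \in \R^m$. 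Because the dualized constraint is an equality, the multiplier $y$ carries no sign restriction, so the dual of (EP) is to maximize $\psi(y)$ over all of $\R^m$; and $\psi$ is concave (an affine term minus sums of exponentials of affine functions), so (DEP) is the asserted unconstrained convex program.

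Structurally this is nothing more than evaluating the Fenchel conjugate of the negative-entropy objective, which is the exponential, so I do not expect a genuine obstacle. The one place needing a little care is the passage to the boundary of the open domain $v_f, v_r > 0$: I must confirm that each separable scalar infimum is both finite (not $-\infty$) and attained in the interior rather than only approached as $t \to 0^+$, which is exactly what strict convexity together with coercivity deliver, and which also shows $\mathrm{dom}\,\psi = \R^m$ so that no implicit constraints are hidden in the dual. If desired one can add that strong duality (zero gap, with attainment on both sides) follows from the remark preceding the proposition that (EP) has a finite, attained optimum with strictly feasible points, but this is more than the statement requires.
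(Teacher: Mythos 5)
Your proposal is correct and follows essentially the same route as the paper: form the Lagrangian for the equality constraint, observe that the infimum over $(v_f,v_r)$ separates, and evaluate it at the stationary points $v_f=\exp(S^Ty-c)$, $v_r=\exp(-S^Ty-c)$ to obtain $\psi(y)$. Your additional checks (componentwise coercivity, attainment in the interior of the open orthant, and $\mathrm{dom}\,\psi=\R^m$) simply make explicit what the paper leaves implicit when it sets the partial derivatives to zero.
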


\begin{proof} 
The dual for problem (EP) is derived in terms of the associated Lagrangian,
\begin{equation*}
  \mathcal{L}\left(v_{f},v_{r},y\right) =
    v_{f}^{T}(\log(v_{f}) + c - e) + v_{r}^{T}(\log(v_{r}) + c - e)
  - y^{T}(Sv_{f}-Sv_{r}-b).
\end{equation*}
The dual function $\psi(y)$ is the set of greatest lower bounds
(denoted $\inf$ for infimum) of the Lagrangian over the primal
variables. Equivalently, it is the set of least upper bounds (denoted
$\sup$ for supremum) of the negative Lagrangian:
\[
  \psi(y) = \inf_{v_f,v_r}\  \mathcal{L}\left(v_{f},v_{r},y\right) =
          - \sup_{v_f,v_r}\ -\mathcal{L}\left(v_{f},v_{r},y\right).
\]
The supremum of the linear function $y^T b$ is itself,
and the other terms may be grouped to give
\begin{align}
  \psi(y) &= b^{T}y
           - \sup_{v_f} \left( y^T S v_f - v_f^T(\log(v_f)+c-e) \right)
           - \sup_{v_r} \left(-y^T S v_r - v_r^T(\log(v_r)+c-e) \right).
   \label{eq:psi(y)_sup}
\end{align}
The first supremum is attained when the partial
derivative with respect to $v_f$ is zero:
\begin{equation}
   S^T y - \log(v_f) - c = 0
   \qquad\Leftrightarrow\qquad
   v_f = \exp(S^T y - c).
\label{eq:primaldualmapping1}
\end{equation}
Similarly for the second supremum,
\begin{equation}
   -S^T y - \log(v_r) - c = 0
   \qquad\Leftrightarrow\qquad
    v_r = \exp(-S^T y - c).
\label{eq:primaldualmapping2}
\end{equation}
Substituting
\eqref{eq:primaldualmapping1}--\eqref{eq:primaldualmapping2} into
\eqref{eq:psi(y)_sup} gives the Lagrange dual problem \eqref{eq:dualEP}.
\end{proof}

  The first and second derivatives of $\psi(y)$ are
\begin{align*}
   \grad   \psi(y) &= b - S \exp(S^Ty - c) + S\exp(-S^Ty - c),
\\ \grad^2 \psi(y) &= - S D S^T,
\end{align*}
where $D \equiv \diag(\exp( S^Ty - c) + \exp(-S^Ty - c))$
is a positive-definite diagonal matrix for all finite $y$
(so that $\grad^2 \psi(y)$ is negative semidefinite).
The necessary first- and second-order conditions for a point
$y = y^\star$ to be an optimum are that $\grad \psi(y^\star) = 0$
and $\grad^2 \psi(y^\star)$ be negative semidefinite.
With $v_f^\star \equiv \exp(S^Ty^\star - c)$ and
     $v_r^\star \equiv \exp(-S^Ty^\star- c)$
we see that $(v_f^\star,v_r^\star,y^\star)$ satisfy the optimality
conditions for both (EP) and (DEP).

The dual objective $\psi(y)$ offers a complementary insight into the
meaning of our variational principle. The chemical potential is
defined by $u = -2\rho y^\star$. Up to scalar multiplication, $-b^Ty =
\frac{1}{2\rho}b^Tu$ corresponds to the rate of work being done by the
environment to maintain the system away from equilibrium
\citep{QB05}. Therefore, one may interpret the Lagrange dual problem
as a minimization of this rate of work, balanced against minimization
of the sum of thermodyamically feasible forward and reverse fluxes.
We note that minimization of a weighted linear combination of forward
or reverse fluxes, where thermodynamic equilibrium constants are used
as weighting factors, has previously been explored as an optimality
principle for metabolic networks \citep{holzhutter2004principle}.

A comprehensive introduction to convex optimization is given in
\citep{boyd2004co}.  Problem (EP) is an example of a monotropic
optimization problem \citep{rockafeller1984nfamo}.  Further discussion
of such problems (and duality) can be found in
\citep{bertsekas1999nonlinear}.

\subsection{Strong duality and Tellegen's theorem}
\label{sec:Strong-duality-Tellegens}

Since problem (EP) is convex with a finite attainable optimum
objective and linear constraints, it is known that strong duality
holds \citep{bertsekas1999nonlinear}.  Among other things, strong
duality means that the values of the primal and dual objectives are
equal at the optimum: $\phi(v_f^\star,v_r^\star) = \psi(y^\star)$.
Omitting the $\null^\star$s and using
\eqref{eq:primaldualmapping1}--\eqref{eq:primaldualmapping2}, we have
\begin{align*}
   v_f^T(\log(v_f)+c-e) + v_r^T(\log(v_r)+c-e)
          &= b^Ty - e^T\exp(S^Ty-c) - e^T\exp(-S^Ty-c)
\\        &= b^Ty - e^T v_f - e^T v_r.
\end{align*}
Thus,
\begin{equation}
   v_f^T(\log(v_f)+c) + v_r^T(\log(v_r)+c) = b^Ty.  \label{eq:slaters3}
\end{equation}
Also from
\eqref{eq:primaldualmapping1}--\eqref{eq:primaldualmapping2} we have
\[
   c = -\log(v_f) + S^Ty = -\log(v_r) - S^Ty.
\]
With $u = -2\rho y$, \eqref{eq:slaters3} becomes
\begin{equation}
   u^TS(v_f-v_r) = u^Tb. \label{eq:tellegens}
\end{equation}
On the left of \eqref{eq:tellegens} is the rate of entropy production
by the biochemical network, and on the right is the rate of work done
by the environment to maintain the system away from equilibrium.
Their equality means that the rate at which heat is produced by
chemical reactions equals the rate at which heat is dissipated into
the environment, so the temperature of the system is time-invariant. In
the thermodynamic literature, \eqref{eq:tellegens} is known as the
isothermal Clausius equality \citep{QB05}, whereas from electrical
network literature one may recognize \eqref{eq:tellegens} as a
biochemical version of Tellegen's theorem. 

Oster and Desoer \citep{oster1971tellegen} previously recognized
  that a reaction flux vector satisfying Kirchhoff's current law and a
  chemical potential vector satisfying Kirchhoff's voltage law is
  necessary and sufficient for Tellegen's theorem to hold for a system
  of biochemical reactions. Although Tellegen's theorem holds
  irrespective of any constitutive relationship between flux and
  chemical potential, if one first assumes that unidirectional flux is
  a (strictly) monotonically increasing function of change in chemical
  potential, then one obtains flux and potential vectors satisfying
  Kirchhoff's laws with a single (strictly) convex optimization
  problem \citep{duffin1947nonlinear2a}.  The challenge is to pose an
  optimization problem with optimality conditions that enforce a
  constitutive relationship between flux and potential matching
  established theory in chemical kinetics.

\section{Thermodynamic feasibility and mass action kinetics}
\label{sec:Mass-action-kinetics}

From the optimality conditions for problem (EP), observe that forward
flux is a function of substrate and product chemical potentials. For
an elementary reaction, one would expect that forward flux should
depend only on substrate chemical potential(s) and reverse flux should
depend only on product chemical potential(s)
\citep{ederer2007tfk}. According to mass action kinetics, the rate of an elementary forward reaction only
depends on a forward kinetic parameter and substrate
concentration(s) \citep{Cor04}. In the objective of problem (EP), for simplicity of
exposition, consider replacing $c^{T}(v_{f}+v_{r})$ with
$c_{f}^{T}v_{f}+c_{r}^{T}v_{r}$. When $c_{f}-c_{r}$ is constrained to
lie in the range of $S^{T}$, this provides another mechanism for
satisfying \eqref{fluxpotential}. Here we show that one may choose
$c_{f},c_{r}$ such that mass action kinetics also holds.

The chemical potential of the $i$th metabolite in dilute solution at constant
temperature and pressure can be expressed as
\begin{equation}
   u_i = u_i^o + \rho\log\left(x_i/x_i^o\right),
   \label{eq:chemicalPotentialDefined}
\end{equation}
where $u_i^o$ is the standard chemical potential, $x_i$ is the
molar concentration of the metabolite, and $x_i^o$ is a reference
concentration that we define to be one molar. The standard chemical
potential is therefore the chemical potential of a metabolite in a
reference state. Assume we are given forward and reverse elementary
kinetic parameter vectors $k_f,k_r \in \R^n$ satisfying the
thermodynamic feasibility condition
\begin{equation}
   S^Tu^o = \rho \log (k_r \,./\, k_f)  \label{eq:haldane}
\end{equation}
for a particular standard chemical potential vector $u^o \in \R^m$.
Mass action kinetics may be expressed as a set of vectors
$v_f, v_r, k_f, k_r, x$ that satisfy
\begin{align*}
   \log(v_{f}) &= \log(k_f) + F^T\log(x),
\\ \log(v_{r}) &= \log(k_r) + R^T\log(x),
\end{align*}
where the stoichiometric matrix has been decomposed into the
entry-wise difference between forward and reverse stoichiometric
matrices, respectively $F,R\in \R_{\ge0}^{m,n}$, defined by
\[
\begin{array}{rcl@{\ }l}
   S_{ij}=0 & \rightarrow & \{F_{ij}=\;0,   & R_{ij}=0\},
\\ S_{ij}<0 & \rightarrow & \{F_{ij}=-S_{ij},& R_{ij}=0\},
\\ S_{ij}>0 & \rightarrow & \{F_{ij}=\;0,   & R_{ij}=S_{ij}\}.
\end{array}
\]
With respect to the forward reaction direction, each column of
\textbf{$F$} contains the absolute value of the stoichiometric
coefficient for each substrate, and the corresponding column of $R$
contains the stoichiometric coefficient for each product.

Given a consistent stoichiometric matrix and thermodynamically
feasible kinetic parameters, it is still an important open question
whether for all $b \in \range(S)$ there exists a metabolite
concentration vector $x$ satisfying steady-state mass action kinetics,
namely
\begin{equation}
   S \exp(\log(k_f) + F^T\log(x))
 - S \exp(\log(k_r) + R^T\log(x)) = b.
\label{eq:massActionKinetics}
\end{equation}
Ongoing work aims to establish the necessary and sufficient conditions
for this to be so \citep{HomoExistence}. Nonetheless, assuming
\eqref{eq:massActionKinetics} is satisfiable, there exist $c_{f}$ and
$c_{r}$ such that
\begin{equation*}
   c_f = -\log(k_f) + R^T y^\star, \qquad c_r = -\log(k_r) + F^T y^\star,
\end{equation*}
and optimality conditions
\eqref{eq:primaldualmapping1}--\eqref{eq:primaldualmapping2} become
\begin{equation*}
   \log(v_f^\star) = \log(k_f) - F^T y^\star, \qquad
   \log(v_r^\star) = \log(k_r) - R^T y^\star.
\end{equation*}
The optimal dual vector may then be equated with the negative of
logarithmic concentration $y^\star = -\log(x)$, leading to
satisfaction of mass action kinetics in a non-equilibrium steady
state.

\section{Discussion}

With problem (EP) and Theorem \ref{optimalthmfeas} we provide a new
variational principle that enforces steady state mass conservation,
energy conservation and the second law of thermodynamics for
genome-scale biochemical networks.  Moreover, we prove that all
thermodynamically feasible steady state fluxes are instances of
problem (EP) for a different free parameter vector $c\in \R^n$
(Theorem \ref{thmfeas}). The values of the free parameters influence
the optimal forward and reverse fluxes through the relation
$v_{f}^{\star}{\,.*\,}v_{r}^{\star}=\exp(-2c)$.  Varying $c$ may
change $v_{f}^{\star}$ and $v_{r}^{\star}$ but not the fact that the
corresponding solution is thermodynamically feasible and optimal for
flux balance analysis.  

To compute a thermodynamically feasible flux we must first choose a
value for the vector $c$.  Thus, there is some freedom left in the
model. We see this property of the model as an advantage rather than a
disadvantage, as thermodynamic feasibility is necessary but not
sufficient for satisfaction of mass action kinetics \citep{Jamshidi2008a}.  In the objective
of problem (EP), one could replace $c^{T}(v_{f}+v_{r})$ by
$c_{f}^{T}v_{f}+c_{r}^{T}v_{r}$.  Assuming the existence of a mass
action kinetic non-equilibrium steady state,
section~\ref{sec:Mass-action-kinetics} demonstrates that this state
corresponds to a particular choice of $c_{f},c_{r}$ given particular
kinetic parameters $k_{f},k_{r}$. Kinetic parameters evolve, in the
biological sense, subject to thermodynamic feasibility
\eqref{eq:haldane}.  Even if $c_{f},c_{r}$ could be chosen such that
mass action kinetics holds, the actual $k_{f},k_{r}$ that pertain to a
particular organism's biochemical network would remain unknown. This
problem is not particular to our modeling approach---it reflects a
paucity of suitable enzyme kinetic data in biochemistry generally
\citep{cornishbowden2005eck}. 

In addition to the constraints in problem (FBA), flux balance analysis
often includes extra inequalities on the net flux of internal
reactions.  It is not possible to incorporate explicit inequality
constraints on net flux into problem (EP) because if any such
inequalities were active at the optimum, the corresponding nonzero
dual variables would appear in a modification of the optimality
conditions \eqref{eq:primaldualmapping1}--\eqref{eq:primaldualmapping2} and
interfere with satisfaction of the thermodynamic constraints
\eqref{eq:Deltau} that we know must hold. In practical applications of
problem (EP) to genome-scale biochemical networks, for arbitrarily
chosen $c$, the omission of explicit bounds on net flux leads to a
subset of net fluxes with directions opposite to that known
biochemically.

At this point, one might think that an obvious application would be to
search for free parameters $c_f$, $c_r$ that minimize the Euclidean
distance between predicted and experimentally determined change in
chemical potentials or fluxes. In \emph{E. coli} metabolism
\citep{fleming2009qas}, for example, one can experimentally quantify
transformed reaction Gibbs energy (\emph{in vivo} change in chemical
potential) using quantitative measurement of absolute concentrations
\citep{bennet2009amc}, combined with experimentally derived
\citep{alberty2003tbr,alberty2006bta} or group contribution estimates
\citep{henry2006gst,henry2007tbm,jankowski2008,finley2009thermodynamic}
of standard transformed reaction Gibbs energy. However, obtaining
\emph{in vivo} chemical potentials is not a limitation to validation
of the practical utility of problem (EP). Even if one could find free
parameters $c_f$, $c_r$ corresponding to predicted chemical potentials
close to experimental data, the corresponding fluxes might not satisfy
mass action kinetics, which is known to hold for elementary chemical
reactions.

A reliable algorithm for satisfaction of mass action kinetics at a
non-equilibrium steady state is the subject of ongoing work
\citep{HomoExistence}. Given a consistent stoichiometric matrix $S$,
for there to exist a non-equilibrium steady state it is necessary that
the boundary condition be in the range of the stoichiometric matrix:
$b \in \range(S)$. Given thermodynamically feasible kinetic
parameters in addition, it is an important open problem to establish
if $b \in \range(S)$ is also sufficient for there to exist a
non-equilibrium steady state satisfying mass action kinetics.  It may
be that the set of feasible boundary conditions is a (proper) subset
of the range of $S$. In fact, the boundary condition obtained from
flux balance analysis $b=-S_ev_e^\star$ is an underdetermined
parameter for problem (EP) because $v_e^\star$ may not be unique,
given the absence of strict convexity in problem (FBA). 
Problem (EP) is well defined for any choice of linear objective
function $d^T v_e$ in problem (FBA). Different values of $d$ simply
yield different vectors $b = -S_e v_e^\star$. Problem (EP) may also be
used with other non-FBA methods that calculate
optimal exchange fluxes $v_e^\star$ provided these
methods yield a $b \in \range(S)$.

We consider problem (EP) to represent a mapping between a set of
parameters and the corresponding set of thermodynamically feasible
fluxes and potentials. More precisely, when $S$ is row reduced (to
have full row rank), problem (EP) represents a single-valued
surjective saddle point mapping \citep{rockafellar1997variational}
between a parameter vector $(c,b)$ and the corresponding primal-dual
optimal vector $(v_{f}^{\star},v_{r}^{\star},y^{\star})$. That is,
each parameter vector corresponds to a unique primal-dual optimal
vector, and every optimal vector is associated with at least one
parameter $(c,b)$.  Moreover, under the same conditions, this saddle
point mapping is locally Lipschitz continuous
\citep{dontchev2001primal}. That is, the optimal vector of fluxes and
potentials is smoothly perturbed by a smooth perturbation of the
parameter vector.

The properties discussed in the previous paragraph motivate future
efforts to design algorithms for gradient-based optimization of the
parameter vector. We envisage a convergent sequence of parametric
convex optimization problems whose final optimum satisfies mass-action
kinetic constraints. However, the necessary and sufficient conditions
for convergence of a sequence of parametric convex optimization
problems is still an active research area within convex
analysis. Exploitation of the properties of problem (EP) may provide
the route to such an algorithm.

The computational tractability of \emph{linear} flux balance analysis
(problem (FBA)) is one of the key reasons for its widespread use as a
genome-scale modeling tool. The theorems of this paper provide the
theoretical basis for efficiently computing thermodynamically feasible
fluxes, even for the largest genome-scale biological networks
\citep{thiele2009gcr}.  In particular:
\begin{itemize}
\item Problem (EP) is convex and includes only linear constraints.
  It is feasible whenever problem (FBA) is feasible,
  and its optimal solution is then unique.
\item Efficient polynomial-time algorithms
  \citep{andersen1998computational} exist for solving convex
  optimization problems of this form, based on interior methods
  \citep{Ye1997}.
\item For both (FBA) and (EP), these algorithms are guaranteed to
  return an optimal solution, or a certificate that the
  original problem is infeasible.
\end{itemize}
\noindent
In practice, computing thermodynamically feasible fluxes using
\emph{convex} flux balance analysis described herein should take no
longer than performing \emph{linear} flux balance analysis.

Although the ultimate value of this work lies in what is accomplished
with it in future biological studies, we believe it makes an important
step forward by providing the first computationally tractable method
for implementing energy conservation and the second law of thermodynamics
for genome-scale biochemical networks in a non-equilibrium steady
state. We also establish, in an exact manner, the duality relationship
between reaction rates and chemical potentials. Furthermore, our theorems
extend to any potential network where flux balance holds and the change
in potential can be written as a difference in a monotone function
of fluxes: $\Delta u = g(v_{r})-g(v_{f})$, where $g(\cdot):\R^{n}\to\R^{m}$
is monotone \citep{rockafeller1984nfamo}. These potential networks
admit a convex optimization model whose solution is unique and whose
potentials are Lagrange multipliers for the flux balance constraint.

\section*{Acknowledgements}

We would like to acknowledge invaluable assistance from Ines Thiele
and some helpful thermodynamic discussions with Hong Qian. 
We would also like to thank two anonymous reviewers
     for their constructive comments and for directing
     us to the 1951 paper by W. Millar \citep{millar1951cxvi}.
This project was supported by the U.S. Department of Energy (Office of
Advanced Scientific Computing Research and Office of Biological and
Environmental Research) as part of the Scientific Discovery Through
Advanced Computing program, grant DE-SC0002009.

\footnotesize
\frenchspacing

\bibliographystyle{jtb} 




\begin{thebibliography}{99}

\bibitem[Akle {\em et~al.}, 2011]{HomoExistence}
Akle, S., Dalal, O., Fleming, R. M.~T., Saunders, M.~A., Taheri, N.~A.  \& Ye,
  Y. 2011{\em{}}.
\newblock Existence of positive equilibria for mass conserving and mass-action
  chemical reaction networks with a single-terminal-linkage class.

\bibitem[Alberty, 2003]{alberty2003tbr}
Alberty, R.~A. 2003{\em{}}.
\newblock {\em Thermodynamics of Biochemical Reactions}.
\newblock Wiley-Interscience, Hoboken, NJ.

\bibitem[Alberty, 2006]{alberty2006bta}
Alberty, R.~A. 2006{\em{}}.
\newblock {\em Biochemical Thermodynamics: Applications of Mathematica}.
\newblock Wiley-Interscience, Hoboken, NJ.

\bibitem[Andersen \& Ye, 1998]{andersen1998computational}
Andersen, E.~D. \& Ye, Y. 1998{\em{}}.
\newblock {A computational study of the homogeneous algorithm for large-scale
  convex optimization}.
\newblock Comput Optim Appl,  10 (3), 243--269.

\bibitem[Beard {\em et~al.}, 2002]{beard2002eba}
Beard, D.~A., Liang, S.  \& Qian, H. 2002{\em{}}.
\newblock Energy balance for analysis of complex metabolic networks.
\newblock Biophys. J. 83 (1), 79--86.

\bibitem[Bennett {\em et~al.}, 2009]{bennet2009amc}
Bennett, B.~D., Kimball, E.~H., Gao, M., Osterhout, R., Van~Dien, S.~J.  \&
  Rabinowitz, J.~D. 2009{\em{}}.
\newblock Absolute metabolite concentrations and implied enzyme active site
  occupancy in \emph{Escherichia coli}.
\newblock Nat Chem Biol,  5 (8), 593--9.

\bibitem[Bertsekas, 1999]{bertsekas1999nonlinear}
Bertsekas, D.~P. 1999{\em{}}.
\newblock {\em {Nonlinear Programming}}.
\newblock 2nd edition, Athena Scientific, Belmont, MA.

\bibitem[Boyd \& Vandenberghe, 2004]{boyd2004co}
Boyd, S. \& Vandenberghe, L. 2004{\em{}}.
\newblock {\em {Convex Optimization}}.
\newblock Cambridge University Press, Cambridge, England.

\bibitem[Burton {\em et~al.}, 1957]{burton1957energy}
Burton, K., Krebs, H.  \& Kornberg, H. 1957{\em{}}.
\newblock {\em {Energy Transformations in Living Matter}}.
\newblock Springer-Verlag, Berlin.

\bibitem[Cornish-Bowden, 2004]{Cor04}
Cornish-Bowden, A. 2004{\em{}}.
\newblock {\em Fundamentals of {E}nzyme {K}inetics}.
\newblock 3rd edition, Portland Press, London.

\bibitem[Cornish-Bowden \& Hofmeyr, 2005]{cornishbowden2005eck}
Cornish-Bowden, A. \& Hofmeyr, J. H.~S. 2005{\em{}}.
\newblock Enzymes in context: kinetic characterization of enzymes for systems
  biology.
\newblock The Biochemist,  27 (1), 11--13.

\bibitem[Dontchev \& Rockafellar, 2001]{dontchev2001primal}
Dontchev, A.~L. \& Rockafellar, R.~T. 2001{\em{}}.
\newblock {Primal-dual solution perturbations in convex optimization}.
\newblock Set-Valued Analysis,  9 (1), 49--65.

\bibitem[Duffin, 1947]{duffin1947nonlinear2a}
Duffin, R.~J. 1947{\em{}}.
\newblock Nonlinear networks. iia.
\newblock Bull. Amer. Math. Soc,  53, 963--971.

\bibitem[Ederer \& Gilles, 2007]{ederer2007tfk}
Ederer, M. \& Gilles, E. 2007{\em{}}.
\newblock Thermodynamically feasible kinetic models of reaction networks.
\newblock Biophys. J. 92 (6), 1846--1857.

\bibitem[Finley {\em et~al.}, 2009]{finley2009thermodynamic}
Finley, S.~D., Broadbelt, L.~J.  \& Hatzimanikatis, V. 2009{\em{}}.
\newblock {Thermodynamic analysis of biodegradation pathways}.
\newblock Biotechnol Bioeng,  103 (3), 532--541.

\bibitem[Fleming \& Thiele, 2011]{fleming2011bertalanffy}
Fleming, R. M.~T. \& Thiele, I. 2011{\em{}}.
\newblock {von Bertalanffy 1.0: {A} {COBRA} toolbox extension to
  thermodynamically constrain metabolic models}.
\newblock Bioinformatics,  27 (1), 142--143.

\bibitem[Fleming {\em et~al.}, 2009]{fleming2009qas}
Fleming, R. M.~T., Thiele, I.  \& Nasheuer, H.~P. 2009{\em{}}.
\newblock Quantitative assignment of reaction directionality in
  constraint-based models of metabolism: application to \emph{Escherichia
  coli}.
\newblock Biophys Chem,  145 (2-3), 47--56.

\bibitem[Fleming {\em et~al.}, 2010]{fleming2008stk}
Fleming, R. M.~T., Thiele, I., Provan, G.  \& Nasheuer, H.~P. 2010{\em{}}.
\newblock Integrated stoichiometric, thermodynamic and kinetic modeling of
  steady state metabolism.
\newblock J Theor Biol,  264, 683--92.

\bibitem[Garg {\em et~al.}, 2010]{Garg2010}
Garg, S., Yang, L.  \& Mahadevan, R. 2010{\em{}}.
\newblock Thermodynamic analysis of regulation in metabolic networks using
  constraint-based modeling.
\newblock BMC Res Notes,  3, 125.

\bibitem[Henry {\em et~al.}, 2007]{henry2007tbm}
Henry, C.~S., Broadbelt, L.~J.  \& Hatzimanikatis, V. 2007{\em{}}.
\newblock Thermodynamics-based metabolic flux analysis.
\newblock Biophys J,  92 (5), 1792--1805.

\bibitem[Henry {\em et~al.}, 2006]{henry2006gst}
Henry, C.~S., Jankowski, M.~D., Broadbelt, L.~J.  \& Hatzimanikatis, V.
  2006{\em{}}.
\newblock Genome-scale thermodynamic analysis of \emph{{E}scherichia coli}
  metabolism.
\newblock Biophys J,  90 (4), 1453--1461.

\bibitem[Holzh\"{u}tter, 2004]{holzhutter2004principle}
Holzh\"{u}tter, H. 2004{\em{}}.
\newblock {The principle of flux minimization and its application to estimate
  stationary fluxes in metabolic networks}.
\newblock Eur J Biochem,  271 (14), 2905--2922.

\bibitem[Jamshidi \& Palsson, 2008]{Jamshidi2008a}
Jamshidi, N. \& Palsson, B.~{\O}. 2008{\em{}}.
\newblock Formulating genome-scale kinetic models in the post-genome era.
\newblock Mol. Syst. Biol. 4, 171.

\bibitem[Jankowski {\em et~al.}, 2008]{jankowski2008}
Jankowski, M.~D., Henry, C.~S., Broadbelt, L.~J.  \& Hatzimanikatis, V.
  2008{\em{}}.
\newblock Group contribution method for thermodynamic analysis of complex
  metabolic networks.
\newblock Biophys J,  95 (3), 1487--1499.

\bibitem[Jaynes, 2003]{Jaynes03}
Jaynes, E.~T. 2003{\em{}}.
\newblock {\em Probability Theory -- The Logic of Science}.
\newblock Cambridge University Press, Cambridge.

\bibitem[Kummel {\em et~al.}, 2006]{kummel2006prs}
Kummel, A., Panke, S.  \& Heinemann, M. 2006{\em{}}.
\newblock Putative regulatory sites unraveled by network-embedded thermodynamic
  analysis of metabolome data.
\newblock Mol Sys Bio,  2, 34.

\bibitem[Liebermeister {\em et~al.}, 2010]{liebermeister2010modular}
Liebermeister, W., Uhlendorf, J.  \& Klipp, E. 2010{\em{}}.
\newblock {Modular rate laws for enzymatic reactions: thermodynamics,
  elasticities and implementation}.
\newblock Bioinformatics,  26 (12), 1528--1534.

\bibitem[Maxwell, 1873]{maxwell1873treatise}
Maxwell, J. 1873{\em{}}.
\newblock {\em A Treatise on Electricity and Magnetism}, vol. I and II.
\newblock Clarendon Press, Oxford, U.K.

\bibitem[Millar, 1951]{millar1951cxvi}
Millar, W. 1951{\em{}}.
\newblock Some general theorems for non-linear systems possessing resistance.
\newblock Philosophical Magazine (Series 7),  42 (333), 1150--1160.

\bibitem[Nagrath {\em et~al.}, 2007]{nagrath2007iea}
Nagrath, D., Avila-Elchiver, M., Berthiaume, F., Tilles, A.~W., Messac, A.  \&
  Yarmush, M.~L. 2007{\em{}}.
\newblock Integrated energy and flux balance based multiobjective framework for
  large-scale metabolic networks.
\newblock Ann. Biomed. Eng. 35 (6), 863--885.

\bibitem[Orth {\em et~al.}, 2010]{orth2010flux}
Orth, J.~D., Thiele, I.  \& Palsson, B.~{\O}. 2010{\em{}}.
\newblock {What is flux balance analysis?}
\newblock Nat Biotechnol,  28 (3), 245--248.

\bibitem[Oster {\em et~al.}, 1971]{oster1971nt}
Oster, G., Perelson, A.  \& Katchalsky, A. 1971{\em{}}.
\newblock Network thermodynamics.
\newblock Nature,  234 (5329), 393--399.

\bibitem[Oster \& Desoer, 1971]{oster1971tellegen}
Oster, G.~F. \& Desoer, C.~A. 1971{\em{}}.
\newblock Tellegen's theorem and thermodynamic inequalities.
\newblock Journal of theoretical Biology,  32 (2), 219--241.

\bibitem[Palsson, 2006]{Pal06}
Palsson, B.~{\O}. 2006{\em{}}.
\newblock {\em Systems Biology: Properties of Reconstructed Networks}.
\newblock Cambridge University Press, Cambridge.

\bibitem[Planck, 1945]{planck1945tt}
Planck, M. 1945{\em{}}.
\newblock {\em {Treatise on Thermodynamics}}.
\newblock Courier Dover Publications, Chelmsford, MA.

\bibitem[Price {\em et~al.}, 2006]{looplaw}
Price, N.~D., Thiele, I.  \& Palsson, B.~{\O}. 2006{\em{}}.
\newblock {Candidate states of \emph{Helicobacter pylori}'s genome-scale
  metabolic network upon application of ``loop law" thermodynamic constraints}.
\newblock Biophys J,  90, 3919--3928.

\bibitem[Qian, 2007]{qian2007peh}
Qian, H. 2007{\em{}}.
\newblock Phosphorylation energy hypothesis: open chemical systems and their
  biological functions.
\newblock Annu. Rev. Phys. Chem. 58, 113--142.

\bibitem[Qian \& Beard, 2005]{QB05}
Qian, H. \& Beard, D.~A. 2005{\em{}}.
\newblock Thermodynamics of stoichiometric biochemical networks in living
  systems far from equilibrium.
\newblock Biophys Chem,  114 (2-3), 213--220.

\bibitem[Rockafellar, 1984]{rockafeller1984nfamo}
Rockafellar, R.~T. 1984{\em{}}.
\newblock {\em {Network Flows and Monotropic Optimization}}.
\newblock Wiley, New York.

\bibitem[Rockafellar \& Wets, 1997]{rockafellar1997variational}
Rockafellar, R.~T. \& Wets, R. J.~B. 1997{\em{}}.
\newblock {\em {Variational Analysis}}.
\newblock Springer-Verlag, Berlin.

\bibitem[Ross, 2008]{ross2008taf}
Ross, J. 2008{\em{}}.
\newblock {\em Thermodynamics and Fluctuations Far From Equilibrium}, vol. 80
  of {\em Springer Series in Chemical Physics}.
\newblock Springer, New York.

\bibitem[Savinell \& Palsson, 1992]{Savinell1992c}
Savinell, J.~M. \& Palsson, B.~{\O}. 1992{\em{}}.
\newblock Network analysis of intermediary metabolism using linear
  optimization. {I}. {D}evelopment of mathematical formalism.
\newblock J Theor Biol,  154 (4), 421--454.

\bibitem[Schellenberger {\em et~al.},
  2011{\em{a}}]{schellenberger2011elimination}
Schellenberger, J., Lewis, N.~E.  \& Palsson, B.~{\O}. 2011{\em{a}}.
\newblock Elimination of thermodynamically infeasible loops in steady-state
  metabolic models.
\newblock Biophys J,  100 (3), 544--553.

\bibitem[Schellenberger {\em et~al.}, 2011{\em{b}}]{cobraV2}
Schellenberger, J., Que, R., Fleming, R. M.~T., Thiele, I., Orth, J.~D., Feist,
  A.~M., Zielinski, D.~C., Bordbar, A., Lewis, N.~E., Rahmanian, S., Kang, J.,
  Hyduk, D.  \& Palsson, B.~{\O}. 2011{\em{b}}.
\newblock Quantitative prediction of cellular metabolism with constraint-based
  models: the {COBRA} {T}oolbox v2.0.
\newblock Nat Prot,  6 (9), 1290--1307.

\bibitem[Soh \& Hatzimanikatis, 2010]{soh2010network}
Soh, K. \& Hatzimanikatis, V. 2010{\em{}}.
\newblock {Network thermodynamics in the post-genomic era}.
\newblock Curr Opin Microbiol,  13 (3), 350--357.

\bibitem[Strang, 1986]{strang1986introduction}
Strang, G. 1986{\em{}}.
\newblock {\em Introduction to Applied Mathematics}, vol.~3,.
\newblock Wellesley-Cambridge Press, Wellesley, MA.

\bibitem[Thiele {\em et~al.}, 2009]{thiele2009gcr}
Thiele, I., Jamshidi, N., Fleming, R. M.~T.  \& Palsson, B.~{\O}. 2009{\em{}}.
\newblock Genome-scale reconstruction of \emph{E. coli}'s transcriptional and
  translational machinery: {A} knowledge-base, its mathematical formulation,
  and its functional characterization.
\newblock PLoS Comput Biol,  5 (3), e1000312.

\bibitem[Thiele \& Palsson, 2010]{thieleTests}
Thiele, I. \& Palsson, B.~{\O}. 2010{\em{}}.
\newblock A protocol for generating a high-quality genome-scale metabolic
  reconstruction.
\newblock Nat Protoc,  5, 93--121.

\bibitem[Warren \& Jones, 2007]{warren2007dta}
Warren, P.~B. \& Jones, J.~L. 2007{\em{}}.
\newblock Duality, thermodynamics, and the linear programming problem in
  constraint-based models of metabolism.
\newblock Phys Rev Lett,  99 (10), 108101.

\bibitem[Xu {\em et~al.}, 2008]{xu2008optimization}
Xu, M., Smith, R.  \& Sadhukhan, J. 2008{\em{}}.
\newblock {Optimization of productivity and thermodynamic performance of
  metabolic pathways}.
\newblock Industrial \& Engineering Chemistry Research,  47 (15), 5669--5679.

\bibitem[Yang {\em et~al.}, 2005]{yang2005aip}
Yang, F., Qian, H.  \& Beard, D.~A. 2005{\em{}}.
\newblock Ab initio prediction of thermodynamically feasible reaction
  directions from biochemical network stoichiometry.
\newblock Metab. Eng. 7 (4), 251--259.

\bibitem[Ye, 1997]{Ye1997}
Ye, Y. 1997{\em{}}.
\newblock {\em Interior Point Algorithms: Theory and Analysis}.
\newblock Interscience Series in Discrete Mathematics and Optimization, Wiley,
  New York.

\end{thebibliography}

\end{document}